\renewcommand{\baselinestretch}{1.1}
\theoremstyle{plain}
\newtheorem{theorem}{Theorem}[section]
\newtheorem{lemma}[theorem]{Lemma}
\newtheorem{corollary}[theorem]{Corollary}
\newtheorem{claim}[theorem]{Claim}
\theoremstyle{definition}
\renewcommand{\leq}{\leqslant}
\renewcommand{\geq}{\geqslant}
\newcommand{\DEF}{\sl}    
\newcommand{\sig}{\Sigma_{2}^{\text{P}}}
\begin{document}

\title{Hitting All Maximal Independent Sets of a Bipartite Graph}

\author{Jean Cardinal} 
\author{Gwena\"el Joret}
\address{\newline  D\'epartement d'Informatique 
\newline Universit\'e Libre de Bruxelles
\newline Brussels, Belgium}
\email{\{jcardin, gjoret\}@ulb.ac.be}

\thanks{Gwena\"el Joret is a Postdoctoral Researcher of the Fonds
 National de la Recherche Scientifique (F.R.S.--FNRS)}

\date{\today}
\sloppy
\maketitle

\begin{abstract}
We prove that given a bipartite graph $G$ with vertex set $V$ and an integer $k$, deciding whether there exists a subset of $V$ of size at most $k$ hitting all maximal independent sets of $G$ is complete for the class $\sig$. 
\end{abstract}

\section{Introduction}

A maximal independent set in a graph is a subset of pairwise nonadjacent vertices that is maximal with 
respect to inclusion. We consider the following problem: given a graph $G$ with vertex set $V$ and an integer $k$, 
does there exist a subset of $V$ of size at most $k$ that intersects every maximal independent set of $G$?
While this problem is well known to be NP-hard (see for instance~\cite{DuffusKT91}), it is not known to belong to NP, hence not known to be NP-complete. 
We prove that this is unlikely to be the case: the problem is complete for the class $\sig = \text{NP}^{\text{NP}}$, at the second level 
of the polynomial hierarchy. Furthermore, our proof holds even in the case where the input graph is bipartite.

\subsection{Previous works}
We distinguish three lines of work that are closely related to our result. They concern respectively the clique transversal problem, 
fibres in partially ordered sets, and the clique coloring problem.

\subsubsection{Clique transversals.}

A {\DEF clique} is a subset of pairwise adjacent vertices.
The problem we consider is often formulated in the complement graph, where we wish to hit 
every maximal clique with a subset of the vertices. 
Such a subset is called a {\DEF clique transversal}, and the corresponding clique transversal problem consists of 
deciding whether there exists a clique transversal of size at most $k$.

Early investigations can be found in Aigner and Andreae~\cite{AA86}, Tuza~\cite{Tuza90}, and 
Erd\"{o}s, Gallai, and Tuza~\cite{ErdosGT92}. The latter contains the first NP-hardness proof for the problem. 
The authors also showed that there are graphs on $n$ vertices where the minimum size of a clique transversal 
is as large as $n - o(n)$. 

Balachandran, Nagavamsi, and Rangan~\cite{BalachandranNR96} observed that the minimum clique transversal problem was solvable 
in polynomial time on comparability graphs, a consequence of Menger's theorem.
Clique transversals of line graphs and chordal graphs have been considered by 
Andreae, Schughart, and Tuza~\cite{AndreaeST91}, 
and Andreae~\cite{Andreae98}, respectively.

Later, Guruswami and Pandu Rangan~\cite{GuruswamiR00} proved the NP-hardness of the clique transversal problem in a number of 
restricted cases, including 
planar graphs and line graphs. 
They also proved it remains NP-hard on complements of bipartite graphs, which corresponds exactly to our problem, 
and provided polynomial-time algorithms for some other cases, including strongly chordal graphs and 
Helly circular-arc graphs.

Dur\'{a}n, Lin, and Szwarcfiter~\cite{DuranLS02}  
studied clique-perfect graphs, defined as graphs for which
the minimum size of a clique transversal is equal to the maximum number of pairwise disjoint maximal cliques.

More recently, Dur{\'a}n, Lin, Mera, and Szwarcfiter~\cite{DuranLMS08}  
gave a more efficient algorithm for finding a minimum
clique transversal of a Helly circular arc graph. 
The reader is referred to the paper~\cite{DuranLMS08} for more references. 

\subsubsection{Fibres in posets.}
The notion of fibre of a poset is closely related to the clique transversal problem. 
Given a partially ordered set $P=(X,\leq)$, a fibre is a subset of $X$ that intersects every maximal antichain of $P$, that is, 
every maximal subset of pairwise incomparable elements. Hence fibres are clique transversals of the complement of the comparability
graph of the poset.

Aigner and Andreae~\cite{AA86} asked the question of whether there always existed a fibre of size at most half the number of elements, provided
the poset has no splitting element. Lonc and Rival~\cite{LoncR87} further conjectured that the elements of any finite poset without
splitting element could be partitioned into two fibres. This was answered in the negative by 
Duffus, Sands, Sauer, and Woodrow~\cite{DuffusSSW91}. 
Duffus, Kierstead, and Trotter~\cite{DuffusKT91} 
proved, however, that every poset without splitting element has a fibre of size at most 2/3 its number of elements. 
They also provided a coNP-completeness proof for the verification problem, consisting of deciding whether a given subset is a fibre. This result holds
even if the poset has height two, which again corresponds exactly to our setting, since every bipartite graph is the comparability graph of a poset
of height two. Finally, they proved that the problem of deciding whether there exists a fibre of a given size is NP-hard.

Our proof borrows ideas from the complexity results of Duffus et al.~\cite{DuffusKT91}. Our result can actually be reformulated as follows:
given a poset $P$ and an integer $k$, deciding whether $P$ has a fibre of size at most $k$ is $\sig$-complete, even if $P$ has height two.

\subsubsection{Clique coloring.}
Marx~\cite{Marx11} recently published a proof that the following problem is complete for $\sig$: 
Given a graph, is it possible to color its vertices with two
colors so that no maximal clique is monochromatic? Hence we wish to decide whether there exists a partition of the
vertices into two clique transversals. Our hardness proof uses the same general structure as Marx's reduction.  

Other contributions on the clique coloring problem include for instance Kratochv\'{i}l and Tusza~\cite{KT02} 
and Bacs{\'o}, Gravier, Gy\'arf\'as, Preissmann, and Seb\"o~\cite{BacsoGGPS04}.  

\subsection{Plan}

In the next section, we briefly review different definitions of the class $\sig$ and give examples of problems that are known to be $\sig$-complete.
In Section~\ref{sec:main}, we give the proof of our main result. 

\section{The Polynomial Hierarchy and the class $\sig$}

We briefly review standard material on the complexity classes defining the polynomial hierarchy. The reader is referred to the textbook of Papadimitriou~\cite{Pbook}
for a more extensive treatment, and to the compilation of Schaefer and Umans~\cite{SU02} for more examples of problems that are complete for
some classes of the polynomial hierarchy.

An {\DEF oracle} for a language $L$ can be thought of as a subroutine that checks whether a given word belongs to $L$ in constant time.
When a Turing machine is given access to an oracle for some language that does not belong to P, the class of problems it can solve in
polynomial time is possibly increased. The so-called polynomial hierarchy (PH) is a hierarchy of complexity classes that involves
oracles for languages in NP. Note that in that case it does not matter which particular language $L$ is involved as long as it
is NP-complete. Also note that oracles for NP and coNP yield the same computational power.

The class $\sig$ we will be interested in is also denoted $\text{NP}^{\text{NP}}$, where the exponent means that the machine involved
has access to an NP oracle. Hence this is the class of languages, or problems, that can be decided in polynomial time on a nondeterministic
Turing machine with access to an NP oracle. Similarly, the class $\Pi_2^P$ is defined as $\text{coNP}^{\text{NP}}$, and contains exactly the 
complements of the languages in $\sig$.

In general, PH is the union of $\Delta_0^{\text{P}} = \Sigma_0^{\text{P}} = \Pi_0^{\text{P}} = \text{P}$ and the following classes defined recursively for $i\geq 1$ :
$\Delta_i^{\text{P}} = \text{P}^{\Sigma_{i-1}^{\text{P}}}$, $\Sigma_i^{\text{P}} = \text{NP}^{\Sigma_{i-1}^{\text{P}}}$, and $\Pi_i^{\text{P}} = \text{coNP}^{\Sigma_{i-1}^{\text{P}}}$.
So for instance $\Sigma_1^{\text{P}}=$NP, $\Pi_1^{\text{P}}=$coNP, and $\Pi_2^{\text{P}}=\text{coNP}^{\text{NP}}$.

An alternative definition of the classes $\Sigma_i^{\text{P}}$ and $\Pi_i^{\text{P}}$ makes use of canonical problems generalizing
the NP-complete satisfiability (SAT) problem. Recall that the SAT problem consists of deciding whether $\exists x f(x)$ holds, where $f(x)$ is a boolean formula
defined on the variables $x=(x_1, x_2, \ldots ,x_n)$. It can be assumed without loss of generality that $f$ is given in 3-conjunctive normal form (3-CNF), 
yielding the well-known NP-complete 3SAT problem. It can be shown that the problem of 
deciding whether $\exists x^1 \forall x^2 \exists x^3 \ldots Q_i x^i f(x^1,x^2,\ldots ,x^i)$, where $f$ is a boolean formula, the $x^j$ are disjoint sets of boolean variables, and the existential and universal quantifiers alternate, is complete for $\Sigma_i^{\text{P}}$ (the quantifier $Q_i$ for $x^i$ is universal if $i$ is even, and existential otherwise). This was originally proved by Stockmeyer~\cite{S76} and Wrathall~\cite{W76}. 
Similarly, the problem $\forall x^1 \exists x^2 \forall x^3 \ldots Q_i x^i f(x^1,x^2,\ldots ,x^i)$ 
is complete for $\Pi_i^{\text{P}}$ (here $Q_i$ is universal if $i$ is odd, and existential otherwise). 

We will use the corresponding problem for $\sig$ where $f$ is a disjunction of terms
of size three, or in other words in 3-disjunctive normal form (3-DNF).
A {\DEF Q-3-DNF formula} is a formula $\varphi=\varphi(x, y)$  
with $x = (x_{1}, \dots, x_{n})$ and $y = (y_{1}, \dots, y_{m})$ where
the $x_{i}$'s and $y_{j}$'s are distinct variables, which is of the form 
$$
\exists x \; \forall y \; (t_{1,1} \land t_{1,2} \land t_{1,3}) \lor
(t_{2,1} \land t_{2,2} \land t_{2,3}) \lor \cdots \lor
(t_{q,1} \land t_{q,2} \land t_{q,3})
$$
where, for each $\ell \in \{1,2,\ldots ,q\}$, the term 
$(t_{\ell,1} \land t_{\ell,2} \land t_{\ell,3})$ consists of 
three literals corresponding to variables in $\{x_{1}, \dots, x_{n}\}\cup \{y_{1}, \dots, y_{m}\}$. 
Deciding whether $\varphi$ holds true consists of checking whether there exists a
boolean vector $x$ such that for any boolean vector $y$, there is at least one term
$(t_{\ell, 1}\land t_{\ell, 2}\land t_{\ell, 3})$ that evaluates to true.

\begin{lemma}[Stockmeyer~\cite{S76}; Wrathall~\cite{W76}]
\label{lem:q3dnf}
It is $\sig$-complete to decide whether a Q-3-DNF formula $\varphi$ holds true.
\end{lemma}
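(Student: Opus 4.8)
The plan is to establish the two directions separately, the easy one being membership in $\sig$. A Q-3-DNF formula $\varphi = \exists x\, \forall y\, \varphi'(x,y)$, with $\varphi'$ in 3-DNF, holds exactly when some $x \in \{0,1\}^{n}$ can be chosen so that $\varphi'(x,\cdot)$ is a tautology, equivalently so that the 3-CNF formula $\neg\varphi'(x,\cdot)$ is unsatisfiable. Thus a nondeterministic machine guesses $x$, issues the single oracle query ``is $\neg\varphi'(x,\cdot)$ a yes-instance of 3SAT?'', and accepts iff the answer is no. This places the problem in $\text{NP}^{\text{NP}} = \sig$.

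For hardness I would reduce from the canonical $\sig$-complete problem recalled above, namely deciding whether $\exists x^{1}\, \forall x^{2}\, g(x^{1},x^{2})$ holds for an \emph{arbitrary} Boolean formula $g$. Apply the Tseitin transformation to $g$: introduce a fresh variable $z_{v}$ for each internal node $v$ of the formula tree of $g$, and let $\mathrm{Cons}(x^{1},x^{2},z)$ be the conjunction of the local gate-consistency constraints ``$z_{v} \leftrightarrow (\text{operation at } v \text{ applied to its inputs})$'', each of which expands into clauses of at most three literals. For every assignment of $(x^{1},x^{2})$ the constraints $\mathrm{Cons}$ force $z$ to the unique vector of correct gate values, with $z_{\mathrm{out}} = g(x^{1},x^{2})$; hence $\neg g(x^{1},x^{2}) \Leftrightarrow \exists z\, \bigl(\mathrm{Cons}(x^{1},x^{2},z) \land \neg z_{\mathrm{out}}\bigr)$. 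Setting $h := \mathrm{Cons} \land \neg z_{\mathrm{out}}$ (a 3-CNF formula) and taking negations gives $g(x^{1},x^{2}) \Leftrightarrow \forall z\, \neg h(x^{1},x^{2},z)$, where $\neg h$ is a 3-DNF by De Morgan (padding any term of fewer than three literals by repeating one of them). Therefore
\[
\exists x^{1}\, \forall x^{2}\, g(x^{1},x^{2}) \ \Longleftrightarrow\ \exists x^{1}\, \forall x^{2}\, \forall z\, \neg h(x^{1},x^{2},z) \ \Longleftrightarrow\ \exists x^{1}\, \forall (x^{2},z)\, \neg h(x^{1},x^{2},z),
\]
and putting $x := x^{1}$, $y := (x^{2},z)$, $\varphi' := \neg h$ yields, in polynomial time, a Q-3-DNF formula that holds iff the input formula does.

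The reduction itself is short, so the only point requiring care is the transformation step: one must verify the biconditional $\neg g \Leftrightarrow \exists z\, (\mathrm{Cons} \land \neg z_{\mathrm{out}})$ — in particular that the unit clause $\neg z_{\mathrm{out}}$ is exactly what encodes ``$g$ is false'' — and check that after negating $h$ the matrix really is a disjunction of terms of size at most three. It is also worth recording why the restriction is to 3-DNF rather than 3-CNF: for a fixed $x$, deciding whether a 3-CNF in $y$ is a tautology is a polynomial-time task, so $\exists x\, \forall y\, (\text{3-CNF})$ already lies in NP; it is precisely the use of disjunctive normal form, opposite to the inner universal quantifier, that is responsible for the jump to the second level of the hierarchy.
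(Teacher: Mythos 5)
Your argument is correct, but it is worth noting that the paper does not prove this lemma at all: it is invoked as a black-box citation to Stockmeyer and Wrathall, immediately after recalling (also without proof) that the unrestricted problem $\exists x^{1}\,\forall x^{2}\, f(x^{1},x^{2})$ for an arbitrary Boolean formula $f$ is $\sig$-complete. What you supply is precisely the missing normal-form step: a reduction from that unrestricted version to the 3-DNF version, via biconditional Tseitin constraints (whose use of full equivalences is what guarantees the auxiliary vector $z$ is uniquely determined, so that $\neg g \Leftrightarrow \exists z\,(\mathrm{Cons}\land\neg z_{\mathrm{out}})$ and hence $g \Leftrightarrow \forall z\,\neg h$ with $\neg h$ a 3-DNF after De Morgan and padding), followed by merging $\forall x^{2}\forall z$ into a single universal block. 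The membership argument (guess $x$, one 3SAT oracle query on the 3-CNF $\neg\varphi'(x,\cdot)$, accept iff the answer is no) is also correct, and your closing remark correctly isolates why DNF rather than CNF is essential: a 3-CNF is a tautology iff every clause is, so the CNF variant collapses to NP. The only caveat to record is that your hardness proof is relative to the completeness of the unrestricted $\exists\forall$-formula problem, which is exactly the generic machine-simulation content of Stockmeyer--Wrathall; so your write-up complements rather than replaces that citation, which is a perfectly reasonable division of labor given that the paper itself states the unrestricted result as background.
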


\section{Proof of the Main Result}
\label{sec:main}

We first show that a restricted version of the canonical problem for $\sig$ remains complete for $\sig$.
We then reduce from this problem in our hardness proof. 
We use the notation $[n] := \{1,2,\ldots ,n\}$, where $n$ is any natural number.

\subsection{Nice and Monotone Formulas}

Let $\varphi (x,y)$ be a Q-3-DNF formula, with $x = (x_{1}, \dots, x_{n})$, $y = (y_{1}, \dots, y_{m})$,
and comprising $q$ terms of the form $(t_{\ell,1} \land t_{\ell,2} \land t_{\ell,3})$, $\ell\in [q]$.
The  formula $\varphi$ is said to be {\DEF monotone} if 
the three literals $t_{\ell,1}, t_{\ell,2}, t_{\ell,3}$  are either all positive 
or all negative for each $\ell \in [q]$; in this case 
a term of $\varphi$ is said to be {\DEF positive} ({\DEF negative}) if its three literals
are positive (negative, respectively). Moreover, if  
$\varphi$ is such that, for each $i\in [n]$,   
there exist a positive term and a negative term avoiding the literals $x_{i}$ 
and $\bar x_{i}$, respectively, then $\varphi$ is said to be {\DEF nice}. 
(Note that such formulas have at least one positive term and one negative term.)

\begin{lemma}
\label{lem:monotone}
Given a nice monotone Q-3-DNF formula $\varphi$, it is $\sig$-complete to decide whether $\varphi$ holds true or not. 
\end{lemma}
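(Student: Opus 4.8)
The plan is to reduce from the general Q-3-DNF problem (Lemma~\ref{lem:q3dnf}) in two stages: first make the formula monotone, then make it nice, at each step preserving $\sig$-membership trivially and arguing that the transformation is polynomial and truth-preserving.

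For the monotonicity step, I would introduce for each variable $z$ appearing in $\varphi$ (whether an $x_i$ or a $y_j$) a fresh companion variable $z'$ intended to represent $\bar z$, placed in the same quantifier block as $z$. I then replace every negative literal $\bar z$ in a term by the positive literal $z'$, so that every term becomes all-positive; call the resulting positive part $\psi$. To force the intended semantics $z' = \bar z$, I add extra all-negative and all-positive ``consistency'' terms built from $z, z'$ and a padding literal, designed so that the whole disjunction is satisfied for free whenever some $(z,z')$ pair is set inconsistently (both true or both false). Concretely, one wants a gadget with the property: if all companion pairs are consistent, the formula evaluates exactly as $\varphi$ under the substitution; and if some pair is inconsistent, the formula is automatically true (if the inconsistent pair is among the universally-quantified block, this is fine because the prover loses nothing; if among the existential block, the prover would never choose it). Here the subtlety is matching the quantifier alternation: inconsistencies in the $x$-block must not help the prover spuriously, and inconsistencies in the $y$-block must be ``absorbed'' — so the gadget terms need to be monotone (all positive or all negative) and must fire on exactly the inconsistent assignments of each pair. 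A standard trick is to pad every gadget term up to width exactly three with a dedicated always-available dummy variable from the existential side whose value the prover fixes; getting the polarities and the interaction with $\forall y$ exactly right is the main obstacle and warrants careful bookkeeping.

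For the niceness step, starting from a monotone formula, I would add for each $i \in [n]$ a guarantee that some positive term avoids $x_i$ and some negative term avoids $\bar x_i$. The cleanest way is to introduce two fresh existential variables $a,b$ (or a constant number of them) together with one all-positive term and one all-negative term that do not mention any $x_i$ at all, arranged so that the prover is forced to set these variables to the unique values making those two extra terms false — hence they never contribute — while their presence syntactically certifies niceness. For instance, add the positive term $(a \land a \land a)$ is too crude since it forces truth; instead use a pair of terms over $a,b$ that is jointly falsifiable only at one setting, e.g. negative term $(\bar a \land \bar a \land \bar b)$ and positive term $(a \land b \land b)$: the prover must avoid satisfying either, which pins down $a,b$, and these terms avoid every $x_i$. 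One must double-check that introducing these does not change the truth value of the formula and that the terms are genuinely monotone; since $a,b$ are existential the prover simply commits to the forced assignment, so the universal player gains nothing. Combining both steps yields a nice monotone Q-3-DNF formula computable in polynomial time that holds iff the original $\varphi$ does, and since evaluating any Q-3-DNF formula is in $\sig$, the restricted problem is $\sig$-complete.

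I expect the dominant difficulty to be the monotonization gadget: one has to simulate negation using fresh variables in a way that is (i) monotone, (ii) polynomial-size, (iii) preserves the value under $\exists x \forall y$, and (iv) survives the later niceness padding. The natural failure mode is a gadget that accidentally lets the universal player force the formula false (by exploiting an inconsistent companion pair in the $x$-block whose falsity the prover cannot repair) or lets the prover cheat via an inconsistent pair in the $y$-block; pinning the quantifier-sensitive polarities correctly, and verifying both directions of the equivalence term-by-term, is where the real work lies. The niceness step, by contrast, should be a short add-on once a clean monotone form is in hand.
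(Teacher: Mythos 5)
Your monotonization gadget has a genuine gap exactly where you flag it, and your dismissal of it (``the prover would never choose it'') is backwards. After you replace every negative literal by a positive companion, an inconsistent existential pair $x_i = x_i' = $ true can only \emph{help} the prover: all occurrences of $x_i$ and of the former $\bar x_i$ are now positive, so setting both true weakly enlarges the set of satisfied terms, and nothing in a DNF can punish this --- adding terms to a disjunction only makes the formula easier to satisfy, never harder. Consistency terms that ``fire'' on inconsistent pairs in the existential block reward the cheat outright. Concretely, $\exists x_1 \forall y_1\; (x_1\wedge y_1\wedge y_1)\vee(\bar x_1\wedge \bar y_1\wedge \bar y_1)$ is false, but after your substitution the prover sets $x_1=x_1'=$ true and wins for every $(y_1,y_1')$: consistent $y$-assignments satisfy one of the two rewritten terms, inconsistent ones trigger your $y$-gadget. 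The paper sidesteps the whole issue via Gold's transformation: each mixed term such as $(a\wedge b\wedge\bar c)$ is split into $(a\wedge b\wedge d)\vee(\bar d\wedge\bar c)$ with $d$ a \emph{fresh universally quantified} variable, using the pointwise identity $(a\wedge b\wedge\bar c)\leftrightarrow \forall d\,\bigl[(a\wedge b\wedge d)\vee(\bar d\wedge\bar c)\bigr]$; since only universal variables are added, no consistency enforcement on the existential side is ever needed.

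Your niceness gadget fails for the same CNF/DNF confusion: in a disjunction the prover wants to \emph{satisfy} a term, not avoid it. With fresh existential $a,b$ and the positive term $(a\wedge b\wedge b)$ present, the prover simply sets $a=b$ to true and the whole formula becomes true irrespective of $\varphi$, so ``the prover must avoid satisfying either'' has no force and the reduction collapses. The paper instead adds, for each offending variable $x_i$, the identically false term $(x_i\wedge z\wedge\bar z)$ with $z$ universally quantified (harmless precisely because it can never be satisfied), and then monotonizes it by the same splitting trick, $(x_i\wedge z\wedge\bar z)\leftrightarrow\forall w\,\bigl[(x_i\wedge z\wedge w)\vee(\bar z\wedge\bar w)\bigr]$, which supplies the required negative term avoiding $\bar x_i$ (and symmetrically for the positive side); width-2 terms are finally padded to width 3 with a fresh existentially quantified variable. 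Your two-stage plan has the right shape, but both of your gadgets must be replaced by this universal-variable term-splitting; as written, each stage changes the truth value of the formula.
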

\begin{proof}
The problem clearly belongs to $\sig$, since it is a special case of the 
problem of deciding whether a Q-3-DNF formula holds true, which from Lemma~\ref{lem:q3dnf} is complete for $\sig$.  
For the hardness part, we reduce from the latter problem. 
Let thus $\psi = \psi(x,y)$ be a given Q-3-DNF formula. 

We first transform $\psi$ into a formula containing only monotone terms.
For this, we can apply the same transformation as the one proposed by Gold~\cite{Gold78}
for proving NP-hardness of the monotone SAT problem. For each non-monotone term in
$\psi$, we add an extra variable in the group of variables that are quantified universally,
and split the term into two new terms. Consider such a term and suppose without loss of generality
that it has the form $(a\wedge b\wedge \bar{c})$. (The case where there are two negative literals is
symmetric.) Then we can split it into two terms as follows
$$
(a\wedge b\wedge \bar{c}) \leftrightarrow \forall d\ (a\wedge b\wedge d) \vee (\bar{d}\wedge \bar{c}) .
$$
Doing this for every non-monotone term yields a monotone formula that holds true if and only if $\psi$ holds true.

Now assume $\psi = \psi(x,y)$ is a formula of the above form; that is, the formula is monotone and 
each term has size two or three. 
We now make the formula nice. For each literal $x_i$ such that there is no negative term avoiding $\bar{x}_i$, include a new dummy term
of the form $(x_i\wedge z\wedge \bar{z})$, where $z$ is a new variable that is quantified universally. 
This obviously does not alter the validity of the formula. This term can in turn be split into two monotone terms using
a second additional variable $w$:
$$
(x_i\wedge z\wedge \bar{z}) \leftrightarrow \forall w\ (x_i\wedge z\wedge w) \vee (\bar{z}\wedge \bar{w}).
$$   
Those two new terms are monotone, and the second term is negative and 
does not contain $\bar{x}_i$, as wanted. Apply the symmetric operation 
each literal literal $x_i$ such that there is no positive term avoiding ${x}_i$. 
This ensures that the resulting formula is nice. 

Finally, each term $(a \land b)$ composed 
of two literals can be transformed into a term $(a \land b \land c)$ of size three, where $c$
is a new variable quantified existentially, so that the resulting formula is simultaneously monotone, nice, and Q-3-DNF. 
\end{proof}

\subsection{Main Result}

For simplicity, a vertex-subset of a graph $G$ meeting all maximal 
independent sets of $G$ will be called a {\DEF transversal} of $G$. 

\begin{theorem}
\label{thm:main}
Given a bipartite graph $G$ and a positive integer $k$, it is 
$\sig$-complete to decide whether $G$ has a transversal of size at most $k$. 
\end{theorem}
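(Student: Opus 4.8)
Membership in $\sig$ is immediate: one guesses a set $S\subseteq V$ with $|S|\le k$ (the existential step), and the predicate ``$S$ is a transversal'' belongs to coNP because its negation --- ``some maximal independent set of $G$ is disjoint from $S$'' --- belongs to NP: guess such a set and verify in polynomial time that it is independent, disjoint from $S$, and maximal (every vertex outside it has a neighbour inside it). Hence the problem lies in $\exists\cdot\mathrm{coNP}=\sig$.

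For hardness I plan to reduce from the problem of Lemma~\ref{lem:monotone}. Given a nice monotone Q-3-DNF formula $\varphi=\exists x\,\forall y\,\bigvee_{\ell=1}^{q}T_\ell$ with $x=(x_1,\dots,x_n)$ and $y=(y_1,\dots,y_m)$, I build a bipartite graph $G$ and an integer $k$ such that $G$ has a transversal of size at most $k$ iff $\varphi$ holds. The construction follows the $\exists\cdot\mathrm{coNP}$ template (and, as the introduction indicates, the overall shape of Marx's clique-colouring reduction~\cite{Marx11}): the transversal $S$ will encode the existential choice of $x$, and the maximal independent sets avoided by a ``legal'' $S$ will correspond to the assignments $y$ for which $\varphi(x,y)$ has no satisfied term. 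For each variable $v\in x\cup y$ I create two vertices $p_v$ (``$v$ true'') and $\bar p_v$ (``$v$ false'') placed on opposite sides of the bipartition and joined by an edge, so that every maximal independent set picks at most one of them; for each term $T_\ell$ I create a vertex $c_\ell$, placed on the appropriate side, whose neighbourhood exploits monotonicity: if $T_\ell$ is positive on variables $v_1,v_2,v_3$ then $c_\ell$ is adjacent to $\bar p_{v_1},\bar p_{v_2},\bar p_{v_3}$, and if $T_\ell$ is negative then $c_\ell$ is adjacent to $p_{v_1},p_{v_2},p_{v_3}$. The key property is that in a maximal independent set selecting one literal vertex per variable, $c_\ell$ is \emph{forced} into the set (it then has no neighbour in it) precisely when $T_\ell$ is satisfied by the selected assignment.

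On top of this skeleton I attach forcing gadgets --- one for each $x$-variable, together with a global coordination gadget playing the role of Marx's master gadget --- padded so that the chosen budget $k$ (a polynomial in $n,m,q$; essentially $n+q$ up to an additive constant) is \emph{tight}: any transversal of size at most $k$ is compelled to spend exactly one vertex per $x_i$-gadget and one vertex per term vertex $c_\ell$, with no slack, and at the $x_i$-gadget the only affordable vertex is a literal vertex $p_{x_i}$ or $\bar p_{x_i}$. From such an $S$ one reads off an assignment $x^\ast$ (letting $x^\ast_i$ be the value making the $x_i$-literal vertex that is \emph{not} in $S$ true). The backward implication then runs the analysis of the previous paragraph in reverse: if some $y$ falsified every term of $\varphi$ under $x^\ast$, the corresponding maximal independent set would be disjoint from $S$, contradicting that $S$ is a transversal; hence for all $y$ some term is satisfied and $\varphi$ holds. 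For the forward implication, given a good $x^\ast$ one takes $S^\ast$ to consist of the false-under-$x^\ast$ literal vertex of each $x_i$, together with all term vertices $c_\ell$ and a fixed set of coordination-gadget vertices; then $|S^\ast|\le k$, and any maximal independent set disjoint from $S^\ast$ would contain no $c_\ell$, so maximality at each $c_\ell$ would exhibit, for the induced $y$, a term falsified by $(x^\ast,y)$, contradicting goodness of $x^\ast$. Niceness of $\varphi$ --- the existence, for every $i$, of a positive term avoiding $x_i$ and a negative term avoiding $\bar x_i$ --- is what ensures that every literal vertex has enough incident term vertices for the forcing and maximality arguments to go through without degenerate exceptions and for the two reductions to be genuine inverses.

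The delicate point will be the forcing and coordination gadgets together with the precise value of $k$: the budget must be loose enough to admit the canonical transversal $S^\ast$ when $\varphi$ is true, yet tight enough that no transversal of size at most $k$ can afford to both pin down an assignment $x$ and, in addition, destroy the maximal independent sets that encode a falsifying $y$ --- any slack would let a clever $S$ cheat. Realising this while keeping $G$ bipartite (which rules out the naive triangle-type choice gadget and forces the two literal vertices of a variable onto opposite sides) is the main technical obstacle; granting the gadgets, both implications reduce to the verifications sketched above.
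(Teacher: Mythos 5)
Your membership argument is fine, and your skeleton (one true/false vertex per variable on opposite sides joined by an edge, plus a term vertex adjacent to the three literal vertices that contradict the term, exploiting monotonicity to keep the graph bipartite) is exactly the core of the paper's construction. But the proposal has a genuine gap where the actual work lies: the ``forcing gadgets'' and the ``global coordination gadget'', together with the exact budget $k$, are never constructed --- you explicitly defer them as ``the main technical obstacle''. Everything that makes the reduction sound lives in precisely those gadgets. In the paper, each $x$-variable gets four extra vertices $a_i,b_i,\bar a_i,\bar b_i$ joined to \emph{almost all} of the opposite side, and each term gets two extra vertices $r_\ell,s_\ell$ (resp.\ $r'_\ell,s'_\ell$) likewise joined to almost everything; this produces $2n+q+q'$ pairwise disjoint small maximal independent sets ($\{a_i,\bar b_i,\bar x_i\}$, $\{\bar a_i,b_i,x_i\}$, $\{t_\ell,r_\ell,s_\ell\}$, $\{t'_\ell,r'_\ell,s'_\ell\}$), and the budget is set to exactly that number, which is what pins a purported transversal down and allows it to be canonicalized to contain all term vertices plus, for each $i$, either $\{b_i,\bar x_i\}$ or $\{\bar b_i,x_i\}$. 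Your sketch asserts the analogous tightness (``exactly one vertex per $x_i$-gadget, the only affordable vertex is a literal vertex'', budget ``essentially $n+q$'') without exhibiting gadgets that achieve it; note that in the paper's construction the truth of $x_i$ is \emph{not} encoded by a single vertex and the budget is $2n+q+q'$, two vertices per existential variable, which already shows that the unspecified part is not a routine padding step.

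Moreover, both of your ``verifications sketched above'' presuppose a complete classification of the maximal independent sets of the constructed graph: one must know that, besides the assignment-like (``regular'') sets, the only maximal independent sets are the gadget-generated ones ($P$, $N$, the four triples per variable gadget, the sets $S_i\cup T_i$, $S_i'\cup T_i'$, $\{t_\ell,r_\ell,s_\ell\}$, $W_\ell\cup Z_\ell$, etc.), and that the canonical transversal hits all of them --- this is where niceness of $\varphi$ is used twice (to guarantee $T_i,T_i'\neq\varnothing$, and to discard the degenerate assignments corresponding to $P$, $N$, $S_i\cup T_i$, $S_i'\cup T_i'$ in the backward direction), not merely, as you suggest, to avoid ``degenerate exceptions'' in a maximality argument. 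None of this can even be checked until the gadgets are fixed, and designing bipartite gadgets whose maximal independent sets admit such an enumeration (the near-complete joins to one side are what kill all unwanted extensions) is the substance of the proof. As it stands, the proposal is a correct high-level plan that matches the paper's strategy, but it omits the construction and analysis that constitute the hardness proof.
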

\begin{proof}
First, we notice that the problem belongs to $\sig$. This poses no difficulty, as it is known that deciding whether 
a given subset is a transversal is coNP-complete, even if the graph is bipartite~\cite{DuffusKT91}. 
Hence the whole problem can be solved on a non-deterministic Turing machine
with access to an oracle for deciding an NP-complete language. This is exactly the definition of $\sig$.

For the hardness part, we reduce from the problem of deciding whether a given nice monotone Q-3-DNF formula 
holds true or not, which is $\sig$-complete by Lemma~\ref{lem:monotone}. 
Let $\varphi=\varphi(x, y)$ be such a formula, with 
$x = (x_{1}, \dots, x_{n})$, $y = (y_{1}, \dots, y_{m})$, 
positive terms $(t_{1,1} \land t_{1,2} \land t_{1,3}), \dots, (t_{q,1} \land t_{q,2} \land t_{q,3})$, 
and negative terms 
$(t'_{1,1} \land t'_{1,2} \land t'_{1,3}), \dots, (t'_{q',1} \land t'_{q',2} \land t'_{q',3})$.  

We construct a bipartite graph $G$ from $\varphi$ as follows. The vertex set of $G$ is 
composed of:
\begin{itemize}
\item vertices $a_{i}, b_{i}, x_{i}, \bar a_{i}, \bar b_{i}, \bar x_{i}$ for each 
variable $x_{i}$ ($i\in [n]$); 
\item vertices $y_{j}$ and $\bar y_{j}$ for each 
variable $y_{j}$ ($j\in [m]$); 
\item vertices $t_{\ell}, r_{\ell}, s_{\ell}$ for each positive term
$(t_{\ell,1} \land t_{\ell,2} \land t_{\ell,3})$ ($\ell \in [q]$), and
\item vertices $t'_{\ell}, r'_{\ell}, s'_{\ell}$ for each negative term
$(t'_{\ell,1} \land t'_{\ell,2} \land t'_{\ell,3})$ ($\ell \in [q']$). 
\end{itemize}
The bipartition of $G$ we will consider is $(P, N)$ with 
\begin{align*}
P =\, & \{a_{i}, b_{i}, x_{i}: i\in [n]\}
\cup \{y_{1}, \dots, y_{m}\} 
\cup \{t_{\ell}, r_{\ell}: \ell \in [q]\} \\
& \cup \{s'_{\ell}: \ell \in [q']\}
\end{align*}
and $N := V(G) - P$. ($P$ stands for `positive' and $N$ for `negative'.) 
The edges of $G$ are determined as follows:
\begin{itemize}
\item for each variable $x_{i}$, add the edge $x_{i}\bar x_{i}$; 
\item for each variable $y_{j}$, add the edge $y_{j}\bar y_{j}$; 
\item for each $i\in [n]$, link $a_{i}$ and $b_{i}$ to each vertex in 
$N - \{\bar x_{i}, \bar b_{i}\}$ and $N - \{\bar a_{i}, \bar b_{i}\}$, respectively, 
and $\bar a_{i}$ and $\bar b_{i}$ to each vertex in 
$P - \{x_{i}, b_{i}\}$ and $P - \{a_{i}, b_{i}\}$, respectively; 
\item for each positive term $(t_{\ell,1} \land t_{\ell,2} \land t_{\ell,3})$, 
add the $3$ edges $t_{\ell}\bar x_{i_{1}}, t_{\ell}\bar x_{i_{2}}, t_{\ell}\bar x_{i_{3}}$  
where $x_{i_{d}} = t_{\ell,d}$ for each $d\in \{1,2,3\}$; 
\item for each negative term $(t'_{\ell,1} \land t'_{\ell,2} \land t'_{\ell,3})$, 
add the $3$ edges $t'_{\ell}x_{i_{1}}, t'_{\ell}x_{i_{2}}, t'_{\ell} x_{i_{3}}$  
where $\bar x_{i_{d}} = t'_{\ell,d}$ for each $d\in \{1,2,3\}$; 
\item for each $\ell \in [q]$, link $r_{\ell}$ and $s_{\ell}$ 
to each vertex in $N - \{s_{\ell}\}$ and $P - \{t_{\ell}, r_{\ell}\}$, respectively, and  
\item for each $\ell \in [q']$, link $r'_{\ell}$ and $s'_{\ell}$ 
to each vertex in $P - \{s'_{\ell}\}$ and $N - \{t'_{\ell}, r'_{\ell}\}$, respectively. 
\end{itemize}

\begin{figure}
\begin{center}
\subfigure[\label{fig:vargadget}Gadget for the variable $x_i$. The four maximal independent sets shown remain maximal independent sets in $G$.]{\includegraphics[scale=.6]{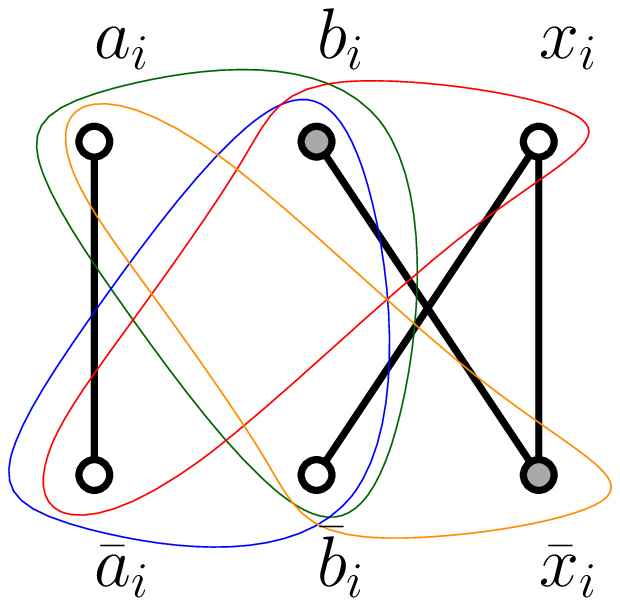}}
\hspace{2cm}
\subfigure[\label{fig:termgadget}Gadget for the $\ell$th positive term $(x_{i_1}\wedge x_{i_2}\wedge x_{i_3})$. The maximal independent set $\{r_{\ell},s_{\ell},t_{\ell}\}$ remains a maximal independent set in $G$, as the vertices $r_{\ell}$ and $s_{\ell}$ are adjacent to all other vertices in the opposite side of the bipartition.]{\includegraphics[scale=.6]{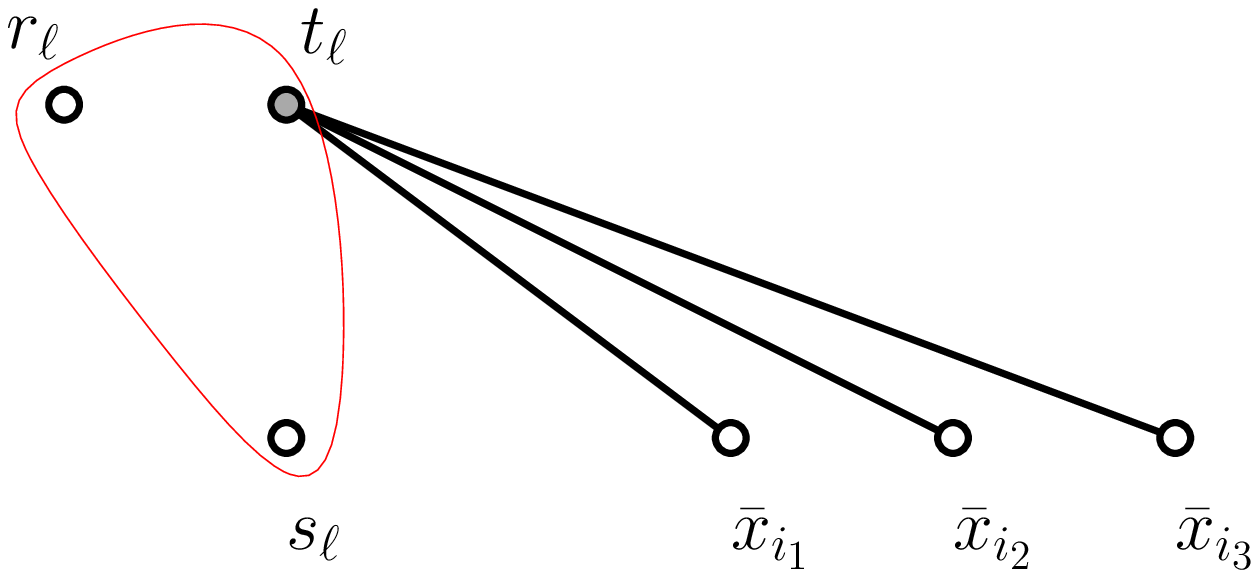}}
\end{center}
\caption{\label{fig:gadgets}Gadgets for the reduction.}
\end{figure}

Let us consider the maximal independent sets of $G$. 
Say that such a set is {\DEF regular} if it avoids the four vertices
$a_{i}, b_{i}, \bar a_{i}, \bar b_{i}$ for each $i \in [n]$,
the two vertices $r_{\ell}, s_{\ell}$ for each $\ell \in [q]$, and 
the two vertices $r'_{\ell}, s'_{\ell}$ for each $\ell \in [q']$.  
It will be helpful to have a characterization of 
the maximal independent sets of $G$ that are {\em not} regular.  

Let $S$ be such a set. First 
suppose that $S \cap \{a_{i}, b_{i}, \bar a_{i}, \bar b_{i}\} \neq \varnothing$ for
some $i \in [n]$. Recall that $a_i$ and $b_i$ are adjacent to all vertices in $N$, except, 
respectively to $\bar{x}_i, \bar{b}_i$, and $\bar{a}_i, \bar{b}_i$, and symmetrically for 
$\bar{a}_i$ and $\bar{b}_i$. Hence every maximal independent set containing, 
say, $a_i$ and $\bar{b}_i$ cannot be extended with vertices outside the variable gadget
defined by the vertices $a_i, b_i, x_i, \bar{a}_i,\bar{b}_i,\bar{x}_i$ (see figure~\ref{fig:vargadget}).
Therefore, upon inspection of $G$, one can check that either
\begin{itemize}
\item $S$ is equal to $P$ or $N$, or 
\item $S \subseteq \{a_{i}, b_{i}, \bar a_{i}, \bar b_{i}, x_{i}, \bar x_{i}\}$ in which case $S$ 
is one of the four sets $\{a_{i}, b_{i}, \bar b_{i}\}, \{\bar a_{i}, b_{i}, \bar b_{i}\}, 
\{a_{i}, \bar b_{i}, \bar x_{i}\}, \{\bar a_{i}, b_{i}, x_{i}\}$, or
\item  $S \cap \{a_{i}, b_{i}, \bar a_{i}, \bar b_{i}\} = \{a_{i}\}$ 
in which case $S$ is the union of  $\{a_{i}, \bar x_{i}\} \cup 
\{x_{1}, \dots x_{i-1}, x_{i+1}, \dots,  x_{n}\} \cup \{y_{1}, \dots, y_{m}\} =: S_{i}$  
with the set $T_{i}$ of all vertices $t_{\ell}$ ($\ell \in [q]$) 
such that the literal $x_{i}$ does not appear in the $\ell$-th positive term of $\varphi$, or 
\item  $S \cap \{a_{i}, b_{i}, \bar a_{i}, \bar b_{i}\} = \{\bar a_{i}\}$ 
in which case $S$ is the union of  $\{\bar a_{i}, x_{i}\} \cup 
\{\bar x_{1}, \dots \bar x_{i-1}, \bar x_{i+1}, \dots, \bar x_{n}\} 
\cup \{\bar y_{1}, \dots, \bar y_{m}\} =: S'_{i}$  
with the set $T'_{i}$ of all vertices $t'_{\ell}$ ($\ell \in [q']$) 
such that the literal $\bar x_{i}$ does not appear in the $\ell$-th negative term of $\varphi$.  
\end{itemize}
A crucial observation is that the sets $T_{i}$ and $T'_{i}$ defined above are non-empty 
for each $i\in [n]$; indeed, this exactly corresponds to the requirement 
that the formula $\varphi$ be nice. 

Now assume $S \cap \{a_{i}, b_{i}, \bar a_{i}, \bar b_{i}\} = \varnothing$ for
each $i \in [n]$. Since $S$ is not regular, we have $S \cap \{r_{\ell}, s_{\ell}\} \neq \varnothing$ for some $\ell \in [q]$, or 
$S \cap \{r'_{\ell}, s'_{\ell}\} \neq \varnothing$ for some $\ell \in [q']$. 
In the first case 
\begin{itemize}
\item either $S = \{t_{\ell}, r_{\ell}, s_{\ell}\}$, or
\item $S \cap \{t_{\ell}, r_{\ell}, s_{\ell}\} = \{t_{\ell}, s_{\ell}\}$, in which case
$S$ is the union of $\{t_{\ell}, s_{\ell}\} \cup \{t'_{1}, \dots, t'_{q'}\}=:W_{\ell}$ with 
the set $Z_{\ell}$ of all vertices in $\{\bar x_{1}, \dots, \bar x_{n}\} \cup \{\bar y_{1}, 
\dots, \bar y_{m}\}$ to which $t_{\ell}$ is not adjacent (that is, all negative literals 
that do not contradict the $\ell$-th positive term). 
\end{itemize}
In the second case the situation is symmetric as expected; that is, 
\begin{itemize}
\item either $S = \{t'_{\ell}, r'_{\ell}, s'_{\ell}\}$, or
\item $S$ is the union of $\{t'_{\ell}, s'_{\ell}\} \cup \{t_{1}, \dots, t_{q}\}=:W'_{\ell}$ with 
the set $Z_{\ell}$ of all vertices in $\{x_{1}, \dots, x_{n}\} \cup \{y_{1}, 
\dots, y_{m}\}$ to which $t'_{\ell}$ is not adjacent. 
\end{itemize}

Summarizing the above discussion, the maximal independent sets of $G$ can be classified as follows:
\begin{enumerate}
\item regular maximal independent sets of $G$;
\item $P$ and $N$;
\item \label{stab:vargadgets} $\{a_{i}, b_{i}, \bar b_{i}\}, \{\bar a_{i}, b_{i}, \bar b_{i}\}, 
\{a_{i}, \bar b_{i}, \bar x_{i}\}, \{\bar a_{i}, b_{i}, x_{i}\}$ for $i\in [n]$; 
\item \label{stab:pst} $S_{i}\cup T_{i}$ for $i\in [n]$; 
\item \label{stab:nst} $S'_{i}\cup T'_{i}$ for $i\in [n]$; 
\item \label{stab:ptermgadgets} $\{t_{\ell}, r_{\ell}, s_{\ell}\}$ for $\ell\in [q]$; 
\item \label{stab:pwz} $W_{\ell} \cup Z_{\ell}$ for $\ell\in [q]$; 
\item \label{stab:ntermgadgets} $\{t'_{\ell}, r'_{\ell}, s'_{\ell}\}$ for $\ell\in [q']$, and 
\item \label{stab:nwz} $W'_{\ell} \cup Z'_{\ell}$ for $\ell\in [q']$.  
\end{enumerate}

Let $k:= 2n + q + q'$. We are now ready to show that the formula $\varphi$ is true if and only if 
$G$ has a transversal of size at most $k$. 

{\bf ($\Rightarrow$)}
First suppose that $\varphi$ holds true.  
We have to show that there exists a transversal $X\subseteq V(G)$ of $G$ of size at most $k$.
We refer to the satisfying truth assignment for $x$ as $\mu: \{x_1, x_2, \ldots , x_n\}\mapsto \{\text{True}, \text{False}\}$.
We define $X$ as follows:
$$
X:= \{t_{\ell} : \ell\in [q]\}\cup \{t'_{\ell} : \ell\in [q']\} \cup \{b_i, \bar{x}_i :  \mu (x_i)=\text{True}\} \cup \{\bar{b}_i, x_i :  \mu (x_i)=\text{False}\} .
$$

The set $X$ has size exactly $k$. We now have to check that $X$ is actually a transversal. Since all vertices $t_{\ell}$ and $t'_{\ell}$ belong to $X$, the sets
$P$ and $N$, and every set of type \ref{stab:ptermgadgets} and \ref{stab:ntermgadgets} above are hit. For the same reason, 
the independent sets of type \ref{stab:pwz} and \ref{stab:nwz} are hit as well. Since the sets of the form 
$T_i$ and $T'_i$ are nonempty (see remark above), the independent sets of type \ref{stab:pst} and \ref{stab:nst} are also hit by the
vertices $t_{\ell}$ and $t'_{\ell}$ of $X$. And since $X$ contains either $\{b_i, \bar{x}_i\}$ or $\{\bar{b}_i, x_i\}$ for every $i\in [n]$, the independent sets
of type \ref{stab:vargadgets} are all hit. 

Thus it only remains to check whether the regular maximal independent sets are hit by $X$. Suppose this is not the case
and let $U\subseteq V(G)$ be a regular maximal independent set that does not intersect $X$. In particular, $U$ does not contain
any vertex of the form $t_{\ell}$ or $t'_{\ell}$. The set $U\cap \{x_1,\bar{x}_1,x_2,\bar{x}_2,\ldots ,\bar{x}_n,x_n\}$, by definition, 
contains exactly the literals that are set to True by the assignment $\mu$. Also, the intersection $U\cap \{y_1,\bar{y}_1,y_2,\bar{y}_2,\ldots ,\bar{y}_m,y_m\}$ 
can be interpreted as a truth assignment for $y$, as exactly one each pair of 
$y_j, \bar{y}_j$ ($j\in [m]$) must be contained in $U$. 
In this truth assignment for $y$, we set $y_j$ (respectively, $\bar{y}_j$) to True whenever it is contained in $U$. 
But now since this combined truth assignment is a truth assignment satisfying $\varphi$, there must exist a satisfied term. Let $t$ denote the vertex corresponding to this term; that is, $t= t_{\ell}$ 
if it is the $\ell$th positive term,  $t= t'_{\ell}$ if it is the $\ell$th negative one. 
The vertex $t$ is adjacent only to the three literals that contradict it. Since the term is satisfied, $t$ is not adjacent to any vertex of $U$. But then $U \cup \{t\}$ is again an independent set, contradicting the maximality of $U$. Hence such a $U$ cannot exist, and $X$ meets every maximal independent set of $G$. 

{\bf ($\Leftarrow$)}
Now assume that $G$ has a transversal $X$ of size at most $k$. In order to deduce a satisfying truth assignment for $x$, we first need to transform $X$ into some canonical form.

\begin{claim}
If $G$ has a transversal of size at most $k$, then it also has a transversal $X'$ of size exactly $k$, with the following properties:
\begin{itemize}
\item $\{t_{1}, \dots, t_{q}\}\subseteq X'$;
\item $\{t'_{1}, \dots, t'_{q'}\}\subseteq X'$, and
\item for each $i\in [n]$, either $\{\bar b_i, x_i\} \subseteq X'$ or $\{b_i, \bar x_i\} \subseteq X'$.
\end{itemize}
\end{claim} 
\begin{proof}
First we observe that the $k$ maximal independent sets 
$\{a_{i}, \bar b_{i}, \bar x_{i}\}$ ($i \in [n]$), 
$\{\bar a_{i}, b_{i}, x_{i}\}$ ($i \in [n]$), 
$\{t_{\ell}, r_{\ell}, s_{\ell}\}$ ($\ell\in [q]$), and 
$\{t'_{\ell}, r'_{\ell}, s'_{\ell}\}$ ($\ell\in [q']$) 
are all pairwise disjoint. Thus $|X| = k$, and 
$X$ meets each of them in exactly one vertex. Moreover, $|X \cap \{x_{i}, \bar x_{i}\}| \leq 1$ for each 
$i \in [n]$, since  $X$ would otherwise be disjoint from the maximal independent set 
$\{a_{i}, b_{i}, \bar b_{i}\}$. Let $Q_{i} := \{\bar b_{i}, x_{i}\}$ if 
$x_{i} \in X$, and let $Q_{i} := \{b_{i}, \bar x_{i}\}$ otherwise. (Let us
recall that possibly $X$ contains neither of $x_{i}, \bar x_{i}$.)
Let
$$
X' := \{t_{1}, \dots, t_{q}\} \cup \{t'_{1}, \dots, t'_{q'}\} 
\cup \bigcup_{i=1}^{n} Q_{i}. 
$$
By definition $X'$ is in canonical form and $|X'|=k$. Thus it remains to show that
$X'$ is indeed a transversal. Using that $X$ is a transversal and that
\begin{itemize}
\item $\{t_{1}, \dots, t_{q}\} \cup \{t'_{1}, \dots, t'_{q'}\} \subseteq X'$; 
\item $X \cap \{x_{1}, \dots, x_{n}\} \subseteq X' \cap \{x_{1}, \dots, x_{n}\}$;
\item $X \cap \{\bar x_{1}, \dots, \bar x_{n}\} \subseteq X' \cap \{\bar x_{1}, \dots, \bar x_{n}\}$, and
\item $X \cap (\{y_{1}, \dots, y_{m}\} \cup \{\bar y_{1}, \dots, \bar y_{m}\}) = \varnothing$,
\end{itemize}
it directly follows that $X'$ meets each regular maximal independent set of $G$. 
Hence it remains to consider the non-regular ones. Clearly $X'$ intersects those 
of type 2, 3, 6, 7, 8, and 9. (Here we use that $q \geq 1$ and $q' \geq 1$.) As for types 4 and 5, 
the sets $S_{i}\cup T_{i}$ and $S'_{i}\cup T'_{i}$ have also a non-empty intersection with $X'$ for each 
$i\in [n]$, since $T_{i} \neq \varnothing$ and $T'_{i} \neq \varnothing$ by the properties of $\varphi$. 
Therefore $X'$ is a transversal, as desired. 
\end{proof}

So we now have a canonical transversal $X'$, and we wish to show that $\varphi$ holds.
We construct a truth assignment $\mu : \{x_1,x_2,\ldots ,x_n\}\mapsto \{\text{True}, \text{False}\}$ by letting $\mu (x_i)$ to True 
if and only if $x_i\not\in X'$. We need to show that this value for $x$ is a witness that $\varphi$ holds true. 

Consider a truth assignment $\mu'$ for $x,y$ that coincides with $\mu$ on $x$. We can assume that this assignment is not trivial in the sense that not every variable has the same 
truth value. Otherwise, since there exist at least one positive and one negative term, the formula is trivially satisfied. 
Furthermore, since the formula is nice, it is also trivially satisfied by the
assignment where only one negated literal $\bar{x}_i$ holds true, and all others variables are set to True, or, symmetrically where $x_i$ 
is set to True and all other variables to False. Hence we may assume that the considered assignment is not of this form either. 

Consider a maximal independent set $U$ containing exactly the literals set to True by $\mu'$.
With the above assumptions, the maximal independent set $U$ cannot be equal to $P$ or $N$, as they correspond to trivial truth assignment, where everything
is set to True (respectively, to False). Since $U$ contains one literal for each variable, it clearly cannot be of type \ref{stab:vargadgets}, \ref{stab:ptermgadgets}, or \ref{stab:ntermgadgets}. The set $U$ cannot be of the form $S_i\cup T_i, i\in [n]$ (type \ref{stab:pst}) either, as this case occurs when only one negated literal $\bar{x}_i$
holds true, and all others variables are set to True. Similarly, we can discard the case where $U$ is of the form $S'_i\cup T'_i, i\in [n]$ (type \ref{stab:nst}). 
Finally, $U$ cannot be of the form 
$W_{\ell} \cup Z_{\ell}$ for $\ell\in [q]$ (type \ref{stab:pwz}), nor of the form $W'_{\ell} \cup Z'_{\ell}$ for $\ell\in [q']$ (type \ref{stab:nwz}), since none of those two types of
independent sets contain one literal for each variable.

Hence the only remaining possibility is that $U$ is a regular maximal independent set. Then $U$ must contain a vertex $t_{\ell}$ for some $\ell\in [q]$, 
or a vertex $t'_{\ell}$ for some $\ell\in[q']$, as otherwise it would not be hit by $X'$. Suppose, without loss of generality, that $U$ contains $t_{\ell}$. 
Then $t_{\ell}$ is not adjacent to any literal in $U$, hence it must correspond to a term of $\varphi$ that is satisfied by the assignment. 
Since the reasoning holds for every assignment coinciding with $\mu$ on $x$, it shows that $\varphi$ holds true, as needed.
\end{proof}

As mentioned in the introduction, Theorem~\ref{thm:main} can equivalently be formulated in terms of clique 
transversals or fibres in posets. For completeness, we include these statements as corollaries. 

\begin{corollary}
Given a graph $G$ whose complement is bipartite and a positive integer $k$, it is 
$\sig$-complete to decide whether there is a clique transversal of $G$ of size at most $k$. 
\end{corollary}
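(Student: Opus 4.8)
The plan is to derive this immediately from Theorem~\ref{thm:main} by passing to the complement graph. Write $\bar G$ for the complement of $G$. A subset of vertices is a clique of $G$ precisely when it is an independent set of $\bar G$, and this correspondence preserves inclusion; hence the maximal cliques of $G$ are exactly the maximal independent sets of $\bar G$. Consequently a set $X \subseteq V(G)$ is a clique transversal of $G$ if and only if it is a transversal of $\bar G$ in the sense defined just before Theorem~\ref{thm:main}, so that $(G,k)$ is a yes-instance of the clique transversal problem exactly when $(\bar G, k)$ is a yes-instance of the transversal problem.

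Next I would observe that the hypothesis ``the complement of $G$ is bipartite'' says exactly that $\bar G$ is a bipartite graph. Thus the question ``does $G$ have a clique transversal of size at most $k$?'' is, verbatim, the question ``does the bipartite graph $\bar G$ have a transversal of size at most $k$?''. Since the map $G \mapsto \bar G$ is computable in polynomial time and is an involution, this identification is a polynomial-time reduction in both directions between the clique transversal problem restricted to complements of bipartite graphs and the transversal problem restricted to bipartite graphs.

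Finally I would invoke Theorem~\ref{thm:main}, which asserts that the latter problem is $\sig$-complete. Membership in $\sig$ transfers along the same reduction (alternatively it is recorded in~\cite{DuffusKT91,GuruswamiR00}), and $\sig$-hardness transfers as well, yielding the claimed completeness. There is essentially no obstacle here: the only point deserving a line of care is the elementary remark that complementation is inclusion-preserving when vertex subsets are viewed as candidate cliques of $G$ versus candidate independent sets of $\bar G$, so that maximal cliques of $G$ correspond one-to-one with maximal independent sets of $\bar G$.
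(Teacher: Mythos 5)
Your argument is correct and is exactly the paper's intended justification: the corollary is stated as an immediate reformulation of Theorem~\ref{thm:main} via complementation, since maximal cliques of $G$ are the maximal independent sets of $\bar G$ and the hypothesis says $\bar G$ is bipartite. Nothing is missing.
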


\begin{corollary}
Given a poset $P$ and an integer $k$, deciding whether $P$ has a fibre of size at most $k$ is $\sig$-complete, 
even if $P$ has height two.
\end{corollary}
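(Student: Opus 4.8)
The plan is to reduce the bipartite transversal problem of Theorem~\ref{thm:main} to the fibre problem, using the standard dictionary between bipartite graphs and posets of height two. First I would recall the basic correspondence: for a poset $P=(X,\leq)$ with comparability graph $G_P$ (the graph on $X$ in which two elements are adjacent exactly when they are comparable), the maximal antichains of $P$ are precisely the maximal independent sets of $G_P$. Hence a subset of $X$ is a fibre of $P$ if and only if it is a transversal of $G_P$ in the sense used in Theorem~\ref{thm:main}.

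Next I would set up the reduction. Given an instance $(G,k)$ with $G$ bipartite, compute in polynomial time a bipartition $(A,B)$ of $V(G)$, and define a relation on $V(G)$ by taking the elements of $A$ as minimal elements, the elements of $B$ as maximal elements, and setting $a <_P b$ whenever $a\in A$, $b\in B$, and $ab\in E(G)$. This is a partial order of height two (every chain has at most two elements), and since $G$ has no edges inside $A$ or inside $B$, its comparability graph is exactly $G$. By the correspondence above, $P$ has a fibre of size at most $k$ if and only if $G$ has a transversal of size at most $k$. Combined with Theorem~\ref{thm:main}, this proves that the fibre problem is $\sig$-hard, and remains so for posets of height two.

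For membership in $\sig$, I would invoke the result of Duffus, Kierstead, and Trotter~\cite{DuffusKT91} that deciding whether a given subset of a height-two poset is a fibre is coNP-complete. A nondeterministic polynomial-time machine can then guess a candidate subset of size at most $k$ and verify that it is a fibre with a single query to an NP oracle (for the complement of this verification language), placing the problem in $\text{NP}^{\text{NP}}=\sig$. Together with the hardness argument, this establishes the corollary.

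I do not anticipate a real obstacle here, since the argument is essentially a change of vocabulary. The only points requiring a little care are verifying that the relation constructed above is genuinely a partial order whose comparability graph reproduces $G$ exactly (including isolated vertices of $G$, which become elements of $P$ that are simultaneously minimal and maximal and induce no comparabilities), and confirming that the height-two restriction is preserved by the reduction so that the ``even if $P$ has height two'' clause is legitimately obtained.
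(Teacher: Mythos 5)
Your proposal is correct and is essentially the argument the paper intends: the corollary is obtained from Theorem~\ref{thm:main} via the standard correspondence (noted in the paper's introduction) that every bipartite graph is the comparability graph of a height-two poset, under which maximal antichains are exactly maximal independent sets and fibres are exactly transversals, with membership in $\sig$ following from the coNP-complete fibre-verification result of Duffus, Kierstead, and Trotter. Your attention to isolated vertices and to preserving the height-two restriction is a fine extra check but raises no issues.
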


\section*{Acknowledgment}
We thank the anonymous referees for their careful reading of the manuscript and their helpful comments. 

\bibliographystyle{abbrv}
\bibliography{fiber}

\end{document}